\theoremstyle{plain}
\newtheorem{theorem}{Theorem}[section]
\newtheorem{lemma}[theorem]{Lemma}
\newtheorem{proposition}[theorem]{Proposition}
\theoremstyle{definition}
\theoremstyle{remark}
\newtheorem{remark}{Remark}
	\title{KALMAN FILTERING WITH CENSORED MEASUREMENTS}
	\author
	{
		Kostas~Loumponias
		\And
		George~Tsaklidis 
		\thanks{K. Loumponias and G. Tsaklidis are with the Department of Mathematics, Aristotle University of Thessaloniki, GR-54124, Thessaloniki, Greece (e-mail: kostikasl@math.auth.gr; tsaklidi@math.auth.gr) }
	}
\begin{document}
	
	\maketitle

	\begin{abstract}
		
		This paper concerns Kalman filtering when the measurements of the process are censored. The censored measurements are addressed by the Tobit model of Type I and are one-dimensional with two censoring limits, while the (hidden) state vectors  are multidimensional. For this model, Bayesian estimates for the  state vectors are provided  through a recursive algorithm of Kalman filtering type. Experiments are presented to illustrate the effectiveness and applicability of the  algorithm. The experiments show that the proposed method outperforms other filtering methodologies in minimizing the computational cost as well as the overall Root Mean Square Error (RMSE) for synthetic and real data sets.
		
	\end{abstract}
	
	\keywords{ Kalman Filter, Censored Data, Bayesian Estimates, Censored Kalman Filter, Tobit Type I}

	\section{Introduction}
	
	Kalman filter (KF) \cite{kalman1960new} has been the subject of extensive research and application, particularly in the area of object tracking and vehicle navigation. The KF algorithm provides optimal estimates for hidden state vectors under the assumption that the  measurements given the state vectors are normally distributed and the corresponding state-space model is linear. However, in many real life problems, the state-space model is non-linear, therefore, the KF process has a poor performance. Many methods have been proposed in order to overcome these drawbacks of the standard KF, such as the Extended Kalman Filter (EKF) \cite{la1996design}, the Unscented Kalman Filter (UKF) \cite{larsen2011unscented},\cite{gustafsson2012some} etc. 
	
	A kind of non-linearity in state-space models is due to  censorship in the measurements  \cite{moore1956estimation}, \cite{hampshire1992tobit}, where methods such as EKF and UKF cannot cope in optimal way with the censored measurements. In what follows we will deal with this kind of non-linear state-space models, i.e., models with censored measurements. The use of statistics for censored data in filtering problems has received an increased attention over the last years \cite{ibarz2005kalman},\cite{allik2014tobit}. In \cite{ibarz2005kalman}, the censored measurements are treated as missing measurements, thus, only the state prediction (an a priori estimation) is utilized when a measurement is censored.  
	
	In \cite{allik2014tobit}, \cite{allik2016tobit}, the Tobit Kalman Filter (TKF) was proposed in order to estimate recursively the state vector,  given the censored measurement. The censored measurements are addressed by the Tobit model of Type I with two censoring limits \cite{tobin1958estimation}. TKF provides unbiased and recursive estimates of the state vectors as a linear combination of the a priori state vector estimation and the associated censored measurement, by taking into account the censoring limits. Furthermore, the TKF process is completely recursive and computationally inexpensive, thus constituting a perfect candidate for  investigating real-time applications. Nevertheless, since by the standard TKF algorithm no calculation of the exact covariance  matrix of the censored measurements is carried out, it provides non-optimal estimates \cite{allik2016tobit}. 
	
	In \cite{loumponias2018adaptive}, an online and real time multi-object tracking (MOT) algorithm based on censored measurements is presented. More specifically, the authors utilize the Adaptive Tobit Kalman Filter (ATKF) in order to estimate the position of the objects. The ATKF process is based on the same framework as TKF, while the two methods have two crucial differences. ATKF provides 1) the exact estimation of the variance of the censored measurements and 2) adaptive censoring limits at each time step, compared to TKF. In \cite{loumponias2016},\cite{loumponias2016using}, an ATKF was utilized in order to filter spatial coordinates of human skeleton (captured by Kinect Camera \cite{obdrzalek2012accuracy}), however, there was not used the exact covariance  matrix of the censored measurements.  
	Fei Han et al. \cite{han2018improved} deal with TKF for a class of linear discrete-time systems with random parameters. The elements of the state-space matrices are allowed to be random variables in order to reflect  reality. Furthermore, they establish a novel weighting covariance formula to address the quadratic terms associated with the random matrices. The method they propose  copes with one censoring limit.

	The main contributions of this paper is  the establishment of a new Censored Kalman filter (CKF) based on the conditional distribution function of the state vector (exhibiting the hidden states) when the measurements are censored. 
	In accordance with other studies dealing with censored measurements \cite{allik2014estimation},\cite{allik2016tobit}, we do not derive estimates  as a linear combination of the state vector's a priori estimation and the censored measurement. More specifically, Bayesian estimates \cite{chen2003bayesian} are calculated when the measurements lie into the censored area. Furthermore, we cope with i) a multidimensional hidden state vector, ii) one-dimensional censored measurement, and iii) the interval censoring (Type I censoring) \cite{turnbull1976empirical}, where a data point belongs into a bounded interval determined by known lower and upper limit. 
	For that purpose we provide: (a) the estimation of the first and second moment of a multidimensional random vector, conditional on an one-dimensional censored normal  variable, given that their joint unconditional and uncensored distribution  is normal, and
	(b) an accurate calculation of the associated likelihood function given the censored measurements. The proposed method, CKF, upgrades the standard KF process only when the measurements lie into the censored area. The results show that CKF has a better performance than TKF and KF, and a very low computational cost. Furthermore, CKF can be utilized for multidimensional censored measurements, in the case where the coordinates of any measurements are uncorrelated.   
	
	The rest of the paper is organized as follows: In Section 2,  Bayesian state estimates conditional on censored measurements are calculated, and the associated CKF algorithm is presented in detail. In Section 3, experimental results are illustrated using artificial and real data (Multi-Object Tracking) to demonstrate the effectiveness and the applicability of the proposed filtering algorithm. Finally, in Section 4, concluding remarks are provided.
	
	\section{Censored Kalman Filtering}
	
	In this section we deal with the KF process  with censored measurements. First, we describe briefly the meaning of censored measurements and the vanilla KF. Next, we calculate in detail the Bayesian estimates for the states at (discrete) time $t$ given the measurements (either censored or uncensored) till time $t$, denoted by $ {y}_{1},  {y}_{2},...,{y}_{T}$, or briefly as $Y_t = \textbf{y}_{1:t}$. Finally, we provide two recursive algorithms to cope with one-dimensional and multidimensional censored measurements, respectively.   
	
	\subsection{Censored measurements}
	
	The KF process uses a series of measurements, $Y_t$, observed over time, containing statistical noise, in order to estimate the set of unknown state vectors, $\textbf{x}_{1}, \textbf{x}_{2},...,\textbf{x}_{T}$, denoted by $ \textbf{x}_{1:T}$. The standard  state-space model is given by the equations 
	\begin{equation}
	\textbf{x}_{t+1}=\textbf{A}\textbf{x}_{t} +\textbf{w}_t,
	\label{x}
	\end{equation}
	\begin{equation}
	\textbf{y}^*_{t}=\textbf{H}\textbf{x}_{t} +\textbf{v}_t,
	\label{y}
	\end{equation}
	where \textbf{A}, \textbf{H} are the transition and observation matrices, respectively, and $\textbf{w}_t\sim N(\textbf{0},\textbf{Q}_t)$ and $\textbf{v}_t\sim N(\textbf{0},\textbf{R}_t)$ stand for the normally distributed noises of the  process and the measurement, respectively. While KF provides optimal estimates for the linear state-space model (\ref{x})-(\ref{y}), it turns out that many real life applications are described by non-linear state-space models, and consequently KF cannot cope with them. We notice that in such models the non-linearity often arises from  censored measurements, which is the case we will deal with. 
	
	Censoring is a condition in which the value of a measurement or observation is only partially known or unknown. In this paper we deal only with  the case of partially known measurements. A type of that kind of censoring is the Interval Censoring, where all observations lie in a finite interval. In the case of Interval Censoring, the measurements of the censored state-space model are defined by the relations
	
	\begin{equation}
	y_{t}=\begin{cases} y^*_{t},&a<y^*_{t}<b\\
	a,&y^*_{t} \leq a\\
	b,&y^*_{t} \geq b
	\end{cases},
	\label{censored}
	\end{equation}
	\begin{equation}
	y^*_{t} = \textbf{H}\textbf{x}_t + v_t,
	\label{y_lat}
	\end{equation}
	where $y_{t}$ and $y^*_{t}$ stand for the censored and latent (uncensored) measurements, respectively, and $a$ and $b$ are the lower and upper censoring limits, respectively. It is clear by (\ref{censored}), that the censored measurement, $ y_{t} $, is not normally distributed, while, $y^*_t\sim N(m_{y^*_t},r^2_t)$. Therefore, it is necessary to improve (upgrade) the standard KF in order to deal with the censored measurements.
	
	\subsection{Recursive Bayesian Estimations for Censored Measurements}
	
	In what follows we will apply  Bayesian estimation for estimating the unknown probability distribution function (pdf) $p(\textbf{x}_t \mid Y_t)$,  recursively over time using the incoming measurements, $Y_t$. In the case where the variables involved are normally distributed and the state-space model is linear, as given by (\ref{x})-(\ref{y}),  the Bayesian filter becomes  the standard KF. Two assumptions are used
	to derive the recursive Bayesian filter: a) the states follow a first-order Markov process, i.e.,  
	\[p(\textbf{x}_t \mid \textbf{x}_{0:t-1})=p(\textbf{x}_t\mid \textbf{x}_{t-1})\]
	and b) any measurement at some time $t$ does not depend on the previous states (till time $t-1$), given the current state (at time $t$). Using Bayes rule we get that \cite{chen2003bayesian},
	
	\begin{equation}
	\label{bayes}
	p(\textbf{x}_t \mid Y_t) = p(\textbf{x}_t \mid Y_{t-1}) \frac{ p(y_t \mid \textbf{x}_t)}{p(y_t \mid Y_{t-1})}.
	\end{equation}
	The aim of Bayes filter is to provide posterior  estimates for the states -which are considered to be the conditional means-  $\hat{\textbf{x}}_t = E(\textbf{x}_t|Y_t)$, and for the covariance matrices, $Cov(\textbf{x}_t \mid Y_t)$, given the distribution (\ref{bayes}).
	
	In
	what follows, we deal with one-dimensional censored measurements and assume that the random vector $\textbf{x}_t $ given $Y_{t-1} $ is normally distributed. Next, we provide a lemma where the conditional censored pdf $ f^c_{\textbf{x}|y}(\textbf{x}_t|y_t) $ is calculated via the corresponding unconditional censored pdf $ f^c_{\textbf{x},y}(\textbf{x}_t,y_t) $. For that purpose we will use the following notations: $m(\textbf{x}_t) = m_{y^*_t} + \textbf{S}_{\textbf{x}_t,y^*_t}\textbf{S}^{-1}_{\textbf{x}_t} (\textbf{x}_t - m_{\textbf{x}_t}) $, $ s^2 = s^2_{y^*_t} - \textbf{S}_{\textbf{x}_t,y^*_t}\textbf{S}^{-1}_{\textbf{x}_t}\textbf{S}_{\textbf{x}_t,y^*_t}^T$, $ \delta $ is the delta Kronecker function, $f_{\textbf{x},y^*}(\textbf{x}_t,y_t)$ stands for the normal pdf, $ u_{(a,b)}(y_t) $ is a function taking the value $1$ when $ y_t $ belongs to the interval $(a,b)$ and $ 0 $ otherwise, $f_{\textbf{x}}(\textbf{x}_t)$ stands for the marginal normal distribution function of $\textbf{x}_t$, $\Phi$ is the cumulative distribution function of the standard normal distribution, $m_{y^*_t}$, $m_{\textbf{x}_t}$ are the means of $y^*_t$ and $\textbf{x}_t$, respectively, $ s^2_{y^*_t} $ is the variance of $y^*_t$, $\textbf{S}_{\textbf{x}_t} $ is the covariance matrix of $ \textbf{x}_t $, $\textbf{S}_{\textbf{x}_t,y_t^*} $ is the cross-covariance matrix of $ \textbf{x}_t $ and $ y^*_t $, $ a^* = \frac{a-m_{y^*}}{s_{y^*}} $ and $ b^* = \frac{b-m_{y^*}}{s_{y^*}} $. Then the following lemma holds:
	
	\begin{lemma} The conditional censored pdf $ f^c_{\textbf{x}|y}(\textbf{x}_t|y_t) $ can be written in the form
		\begin{align*}
		f^c_{\textbf{x}|y}(\textbf{x}_t|y_t) =& f_{\textbf{x}|y^*}(\textbf{x}_t|y_t=y)u_{(a,b)}(y_t) \\
		&+f_{\textbf{x}}(\textbf{x}_t)\frac{ \Phi\Big(\frac{a-m(\textbf{x}_t)}{s} \Big)} 
		{\Phi(a^*)}\delta(y_t - a) \\
		&+f_{\textbf{x}}(\textbf{x}_t)\frac{ \Phi\Big(\frac{b-m(\textbf{x}_t)}{s} \Big)} 
		{1-\Phi(b^*)}\delta(y_t - b).  
		\end{align*}
		\label{gpdf}
	\end{lemma}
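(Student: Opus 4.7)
The plan is to derive $f^c_{\textbf{x}|y}(\textbf{x}_t\mid y_t)$ by first writing the joint distribution of $(\textbf{x}_t,y_t)$ against the mixed reference measure that is Lebesgue on $\mathbb{R}^n\times(a,b)$ together with two point masses at $y_t=a$ and $y_t=b$ (the censoring forces atoms there), then applying Bayes' rule. The Dirac deltas in the claimed formula are precisely the way to encode these point masses when we write everything as a density with respect to Lebesgue measure on $\mathbb{R}^n\times\mathbb{R}$.

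The first step is to treat the three regions imposed by (\ref{censored}) separately. On $\{a<y_t<b\}$ we simply have $y_t=y^*_t$, so the joint density equals the bivariate normal density $f_{\textbf{x},y^*}(\textbf{x}_t,y_t)$ and the conditional piece is the usual Gaussian conditional $f_{\textbf{x}|y^*}(\textbf{x}_t\mid y_t)$; this produces the first summand. On the atom $\{y_t=a\}$, the joint ``density'' with respect to Lebesgue$\times\delta_a$ is $f_{\textbf{x}}(\textbf{x}_t)\,P(y^*_t\le a\mid \textbf{x}_t)$, which I would express by Dirac delta $\delta(y_t-a)$. The key input is that $y^*_t\mid\textbf{x}_t\sim N(m(\textbf{x}_t),s^2)$ by the standard Gaussian conditioning formula, giving $P(y^*_t\le a\mid\textbf{x}_t)=\Phi\bigl((a-m(\textbf{x}_t))/s\bigr)$; the analogous computation at $y_t=b$ yields the third summand (with a symmetric tail probability). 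Marginalizing over $\textbf{x}_t$ produces the marginal atoms $P(y_t=a)=\Phi(a^*)$ and $P(y_t=b)=1-\Phi(b^*)$, plus the continuous part $f_{y^*}(y_t)$ on $(a,b)$.

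The second step is to divide the joint by this marginal in the sense of densities with respect to the mixed measure, i.e.\ use
\[
f^c_{\textbf{x}|y}(\textbf{x}_t\mid y_t)=\frac{f^c_{\textbf{x},y}(\textbf{x}_t,y_t)}{f^c_{y}(y_t)},
\]
handling each of the three components separately. The interior ratio collapses to $f_{\textbf{x}|y^*}(\textbf{x}_t\mid y_t)$ by the definition of a conditional normal; the atom ratios give $f_{\textbf{x}}(\textbf{x}_t)\Phi\bigl((a-m(\textbf{x}_t))/s\bigr)/\Phi(a^*)$ and the symmetric expression at $b$. Combining the three pieces and attaching the indicator $u_{(a,b)}$ and the deltas $\delta(y_t-a),\delta(y_t-b)$ recovers the formula of Lemma \ref{gpdf}.

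The main obstacle, and the only place where one needs to be careful, is justifying the use of Dirac deltas as a bookkeeping device for the atomic part of the mixed distribution: one must verify that integrating the right-hand side against an arbitrary test function of $y_t$ and then against $\textbf{x}_t$ reproduces the correct conditional expectations in all three cases, rather than just formally manipulating $\delta$'s. Once that is pinned down (either by working with distributions, or equivalently by restating everything in terms of the mixed measure with atoms at $a,b$), the remaining computations are routine Gaussian conditioning.
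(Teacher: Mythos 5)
Your proof is correct and follows essentially the same route as the paper's: decompose the joint censored density into an interior part plus atoms at $y_t=a$ and $y_t=b$, evaluate the atomic masses via the Gaussian conditional $y^*_t\mid\textbf{x}_t\sim N(m(\textbf{x}_t),s^2)$, and divide each piece by the corresponding marginal ($f_{y^*}$ on $(a,b)$, $\Phi(a^*)$, and $1-\Phi(b^*)$). Note that your tail probability $1-\Phi\bigl((b-m(\textbf{x}_t))/s\bigr)$ at $y_t=b$ matches what the paper's own proof derives, whereas the lemma statement as printed has $\Phi\bigl((b-m(\textbf{x}_t))/s\bigr)$ in that numerator, which appears to be a typo.
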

	
	\begin{proof}
		Obviously $ f^c_{\textbf{x},y}(\textbf{x}_t,y_t) $ is given by 
		\begin{align*}
		f_{\textbf{x},y}^c(\textbf{x}_t,y_t) =& f_{\textbf{x},y^*}(\textbf{x}_t,y_t)u_{(a,b)}(y_t) \\
		&+ \int_{-\infty}^{a} f_{\textbf{x},y^*}(\textbf{x}_t, y^*) dy^* \delta(y_t - a) \\
		&+ \int_{b}^{+\infty} f_{\textbf{x},y^*}(\textbf{x}_t, y^*) dy^* \delta(y_t - b),  
		\end{align*}
		from which follows that
		
		\begin{equation}
		\begin{aligned}
		f_{\textbf{x},y}^c(\textbf{x}_t,y_t) =& f_{\textbf{x},y^*}(\textbf{x}_t,y_t)u_{(a,b)}(y_t) \\
		&+ f_{\textbf{x}}(\textbf{x}_t)\Phi\Big(\frac{a-m(\textbf{x}_t)}{s} \Big)\delta(y_t - a) \\
		&+ f_{\textbf{x}}(\textbf{x}_t)(1- \Phi\Big(\frac{b-m(\textbf{x}_t)}{s}\Big))\delta(y_t - b).  
		\end{aligned}
		\label{jpdf}
		\end{equation}
		
		In order to calculate the conditional censored pdf $ f^c_{\textbf{x}|y}(\textbf{x}_t|y_t) $ via the unconditional one given in (\ref{jpdf}), we distinguish three cases: 1) $y_t \in (a,b)$, 2) $y_t = a$ and 3) $y_t = b$. In the case where $ y_t $ lies into the uncensored region $ (a,b) $,  we have that $f^c_{\textbf{x}|y}(\textbf{x}_t|y_t=y)$ is normally distributed and, more specifically,
		\begin{equation}
		f^c_{\textbf{x}|y}(\textbf{x}_t|y_t=y) = f_{\textbf{x}|y^*}(\textbf{x}_t|y_t=y).
		\label{uncensored}
		\end{equation}
		
		In the case where $ y_t=a $,  or equivalently, $y^*_t \leq a$, it is derived by (\ref{jpdf}) that
		
		\begin{equation}
		f^c_{\textbf{x}|y}(\textbf{x}_t|y_t = a)  = f_{\textbf{x}}(\textbf{x}_t)\frac{ \Phi\Big(\frac{a-m(\textbf{x}_t)}{s} \Big)} 
		{\Phi(a^*)}.
		\label{cond}
		\end{equation}
		
		In the same way, it follows that 
		
		\[
		f^c_{\textbf{x}|y}(\textbf{x}_t|y_t = b)  = f_{\textbf{x}}(\textbf{x}_t)\frac{ 1-\Phi\Big(\frac{b-m(\textbf{x}_t)}{s} \Big)} 
		{1-\Phi(b^*)}.
		\] 
	\end{proof}
	
	We observe that (\ref{cond}) has the same form as (\ref{bayes}), and more specifically:
	\begin{itemize}
		\item $f_\textbf{x} ( \textbf{x}_t)$ stands for the a priori distribution $p(\textbf{x}_t |Y_{t-1}),$
		\item $\Phi\Big(\frac{a-m(\textbf{x}_t)}{s} \Big)$ stands for the probability $p(y_t=a|\textbf{x}_t)$ and  
		\item ${\Phi(a^*)}$ stands for the propability $p(y_t=a|Y_{t-1}).$
	\end{itemize}
	Then, the following proposition can be proved.
	
	\begin{proposition}
		For a normally distributed multivariate random variable
		$(\textbf{X}_t,Y^*_t)$ with mean vector $\textbf{m}=[\textbf{m}_\textbf{x}, m_y^*]^T \; and \; $covariance matrix 
		$ \textbf{S} =
		\begin{bmatrix}
		\textbf{S}_\textbf{x}  & \textbf{S}_{\textbf{x},y^*} \\
		\textbf{S}_{\textbf{x},y^*} & s^2_{y^*}
		\end{bmatrix}$,
		the following statements hold: 
		
		\begin{enumerate}
			\item $E(\textbf{x}_t|y^*_t\leq a) =  E(\textbf{x}_t|y_t = a) = \textbf{m}_\textbf{x} - \frac{\textbf{S}_{\textbf{x},y^*}}{s_{y^*}}\frac{\phi(a^*)}
			{\Phi(a^*)},$
			
			\item $Cov(\textbf{x}_t - \hat{\textbf{x}}_t|y_t=a) = \textbf{S}_{\textbf{x}_t} - \frac{\textbf{S}_{\textbf{x},y^*}\textbf{S}_{\textbf{x},y^*}^T}{s^2_{y^*}}\Big(a^*\frac{\phi(a^*)}
			{\Phi(a^*)} + \Big(\frac{\phi(a^*)}{\Phi(a^*)}\Big)^2\Big),$ 
		\end{enumerate}
		where $ a^* = \frac{a-m_{y^*}}{s_{y^*}} $ and $\phi({x})$ stands for the pdf of standard normal distribution. 
		\label{prop_a}
	\end{proposition}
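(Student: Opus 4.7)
My plan is to exploit the equivalence of the events $\{y_t = a\}$ and $\{y^*_t \leq a\}$, which is immediate from the censoring rule (\ref{censored}), and then to compute the conditional moments of $\textbf{x}_t$ by conditioning first on $y^*_t$ and then on the truncation event. This avoids the more cumbersome integration over $\textbf{x}_t$ that would arise if one worked directly from the conditional density in Lemma \ref{gpdf}.

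For part 1, I would recall that since $(\textbf{X}_t, Y^*_t)$ is jointly normal, the Schur complement formula gives
\begin{equation*}
E(\textbf{x}_t \mid y^*_t) = \textbf{m}_\textbf{x} + \frac{\textbf{S}_{\textbf{x},y^*}}{s^2_{y^*}}(y^*_t - m_{y^*}), \qquad Cov(\textbf{x}_t \mid y^*_t) = \textbf{S}_{\textbf{x}_t} - \frac{\textbf{S}_{\textbf{x},y^*}\textbf{S}_{\textbf{x},y^*}^T}{s^2_{y^*}},
\end{equation*}
with the conditional covariance being deterministic in $y^*_t$. Applying the tower property yields
\begin{equation*}
E(\textbf{x}_t \mid y^*_t \leq a) = \textbf{m}_\textbf{x} + \frac{\textbf{S}_{\textbf{x},y^*}}{s^2_{y^*}}\bigl(E(y^*_t \mid y^*_t \leq a) - m_{y^*}\bigr),
\end{equation*}
and I would then plug in the classical truncated-normal mean $E(y^*_t \mid y^*_t \leq a) = m_{y^*} - s_{y^*}\,\phi(a^*)/\Phi(a^*)$ to finish.

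For part 2, I would use the law of total covariance:
\begin{equation*}
Cov(\textbf{x}_t \mid y^*_t \leq a) = E\bigl[Cov(\textbf{x}_t \mid y^*_t) \,\big|\, y^*_t \leq a\bigr] + Cov\bigl[E(\textbf{x}_t \mid y^*_t) \,\big|\, y^*_t \leq a\bigr].
\end{equation*}
Because $Cov(\textbf{x}_t \mid y^*_t)$ is constant, the first term equals $\textbf{S}_{\textbf{x}_t} - \textbf{S}_{\textbf{x},y^*}\textbf{S}_{\textbf{x},y^*}^T/s^2_{y^*}$, while the second term reduces to $(\textbf{S}_{\textbf{x},y^*}\textbf{S}_{\textbf{x},y^*}^T / s^4_{y^*})\,Var(y^*_t \mid y^*_t \leq a)$ by linearity. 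Substituting the standard truncated-normal variance $Var(y^*_t \mid y^*_t \leq a) = s^2_{y^*}\bigl(1 - a^*\phi(a^*)/\Phi(a^*) - (\phi(a^*)/\Phi(a^*))^2\bigr)$ and collecting the two $\textbf{S}_{\textbf{x},y^*}\textbf{S}_{\textbf{x},y^*}^T/s^2_{y^*}$ contributions recovers the stated expression.

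The main obstacle is establishing the two truncated-normal moment formulas, since everything else reduces to algebra. These are standard but I would include short derivations using the identity $\phi'(z) = -z\phi(z)$: integrating $\int_{-\infty}^{a^*} z\phi(z)\,dz = -\phi(a^*)$ gives the mean formula, and integrating $\int_{-\infty}^{a^*} z^2\phi(z)\,dz$ by parts gives $\Phi(a^*) - a^*\phi(a^*)$, from which the variance follows after subtracting the squared truncated mean. With these two lemmas in hand, both claims of the proposition follow by direct substitution.
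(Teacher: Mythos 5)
Your proposal is correct. For part 1 you and the paper do essentially the same thing: the paper writes $E(\textbf{x}_t\mid y_t=a)$ as an integral against the density of Lemma \ref{gpdf}, applies Fubini, and arrives at $\frac{1}{\Phi(a^*)}\int_{-\infty}^{a}E(\textbf{x}_t\mid y^*_t)f_{y^*}(y^*_t)\,dy^*_t$, which is exactly your tower-property step followed by the truncated-normal mean; your version just skips the explicit integral bookkeeping by noting up front that $\{y_t=a\}=\{y^*_t\leq a\}$. For part 2, however, you take a genuinely different and cleaner route. The paper computes the full second-moment matrix $E(\textbf{x}_t\textbf{x}_t^T\mid y_t=a)$ by expanding $E(\textbf{x}_t\textbf{x}_t^T\mid y^*_t)$ into six terms, integrating each against the truncated density (which requires both the first and the second truncated moments of $y^*_t$), and then subtracting the outer product of the conditional mean. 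Your law-of-total-covariance decomposition exploits the fact that $Cov(\textbf{x}_t\mid y^*_t)$ is constant and $E(\textbf{x}_t\mid y^*_t)$ is affine in $y^*_t$, so only the truncated \emph{variance} of $y^*_t$ is needed and the rank-one correction falls out in one line. This buys a shorter argument with far less term-tracking; notably, the paper's displayed formula for the second truncated moment (\ref{sec_truncated}) contains apparent typographical slips ($m^2_{y^*}+m^2_{y^*}$ where one expects $m^2_{y^*}+s^2_{y^*}$, and $2m^2_{y^*}s_{y^*}$ where one expects $2m_{y^*}s_{y^*}$), precisely the kind of clutter your decomposition avoids. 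Both arguments rest on the same two ingredients, namely the Gaussian conditional-moment formulas and the lower-truncated normal moments, and your suggested derivations of the latter via $\phi'(z)=-z\phi(z)$ are standard and adequate.
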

	
	\begin{proof}
		\begin{enumerate}
			\item
			We derive by (\ref{cond}) that
			
			\begin{align*}
			E(\textbf{x}_t|y_t = a) &= \frac{1}{\Phi(a^*)}\int_{-\infty}^{+\infty}\textbf{x}_t f_{\textbf{x}}(\textbf{x}_t)\Phi\Big(\frac{a-m(\textbf{x}_t)}{s} \Big)\textbf{dx}_t \\
			&=\frac{1}{\Phi(a^*)}\int_{-\infty}^{+\infty}\int_{-\infty}^{a}\textbf{x}_t f_{\textbf{x},y^*}(\textbf{x}_t,y^*_t)dy^*_t\textbf{dx}_t\\
			&=\frac{1}{\Phi(a^*)}\int_{-\infty}^{a}\Big(\int_{-\infty}^{+\infty}\textbf{x}_t f_{\textbf{x}|y^*}(\textbf{x}_t|y^*_t )\textbf{dx}_t \Big)f_{y^*}(y^*_t) dy^*_t\\
			&=\frac{1}{\Phi(a^*)}\int_{-\infty}^{a}E(\textbf{x}_t|y^*_t)f_{y^*}(y^*_t) dy^*_t\\
			&=\frac{1}{\Phi(a^*)}\int_{-\infty}^{a} \Big(\textbf{m}_\textbf{x} + \frac{\textbf{S}_{\textbf{x},y^*}}{s^2_{y^*}}(y^*_t -m_{y^*}) \Big)f_{y^*}(y^*_t) dy^*_t\\
			&=\frac{1}{\Phi(a^*)}\Big[\textbf{m}_\textbf{x}\cdot\Phi(a^*) + \frac{\textbf{S}_{\textbf{x},y^*}}{s^2_{y^*}}\Big(\int_{-\infty}^{a} \mkern-18mu y^*_t f_{y^*}(y^*_t) dy^*_t -m_{y^*}\Phi(a^*) \Big)\Big]\\
			&= \frac{1}{\Phi(a^*)}\Big[\textbf{m}_\textbf{x}\Phi(a^*) +\frac{\textbf{S}_{\textbf{x},y^*}}{s^2_{y^*}}\big(m_{tr,y^*}\cdot\Phi(a^*) -m_{y^*}\cdot\Phi(a^*) \big)\Big]
			\end{align*}
			\begin{equation}
			=\textbf{m}_\textbf{x} + \frac{\textbf{S}_{\textbf{x},y^*}}{s^2_{y^*}}\cdot \big(  m_{tr,y^*} -m_{y^*} \big), \qquad
			\qquad \qquad \; \; \,
			\label{cens}
			\end{equation}
			
			where $ m_{tr,y^*} $ stands for the truncated mean of the r.v. $y^*_t$ in the  interval $(-\infty, a)$ \cite{wilhelm2012moments} and is equal with 
			\begin{equation}
			m_{tr,y^*} = {m}_{y^*} -s_{y^*}\cdot \frac{\phi(a^*)}{\Phi(a^*)}.
			\label{trunc}
			\end{equation}
			Then we get by (\ref{cens}) and (\ref{trunc}) that
			\begin{equation}
			E(\textbf{x}_t|y_t = a) = \textbf{m}_\textbf{x} - \frac{ \textbf{S}_{\textbf{x},y^*} }{s_{y^*}} \cdot \frac{\phi(a^*)}{\Phi(a^*)}.
			\label{mean_x}
			\end{equation}
			
			\item We have that \[Cov(\textbf{x}_t - \hat{\textbf{x}}_t|y_t=a) = E(\textbf{x}_t \textbf{x}^T_t| y_t = a) -E(\textbf{x}_t|y_t = a)E(\textbf{x}_t|y_t = a)^T,\]
			where the second term has been evaluated in the first part of the theorem. Then,
			
			\begin{align*}
			E(\textbf{x}_t \textbf{x}^T_t| y_t =a) 
			=&\frac{1}{\Phi(a^*)}\int_{-\infty}^{+\infty}\textbf{x}_t\textbf{x}_t^T f_{\textbf{x}}(\textbf{x}_t)\Phi\Big(\frac{a-m(\textbf{x}_t)}{s} \Big)\textbf{dx}_t \\ 
			=&\frac{1}{\Phi(a^*)}\int_{-\infty}^{+\infty}\int_{-\infty}^{a}\textbf{x}_t\textbf{x}_t^T f_{\textbf{x},y^*}(\textbf{x}_t,y^*_t)dy^*_t\textbf{dx}_t\\
			=&\frac{1}{\Phi(a^*)}\int_{-\infty}^{a}\Big(\int_{-\infty}^{+\infty}\textbf{x}_t\textbf{x}_t^T f_{\textbf{x}|y^*}(\textbf{x}_t|y^*_t )\textbf{dx}_t \Big)f_{y^*}(y^*_t) dy^*_t\\
			=&\frac{1}{\Phi(a^*)}\int_{-\infty}^{a}E(\textbf{x}_t\textbf{x}_t^T|y^*_t)f_{y^*}(y^*_t) dy^*_t\\
			=&\frac{1}{\Phi(a^*)}\int_{-\infty}^{a} 
			\Big(\textbf{m}_\textbf{x}\textbf{m}^T_\textbf{x} + \textbf{S}_{\textbf{x}} 
			-\frac{\textbf{S}_{\textbf{x},y^*}\textbf{S}^T_{\textbf{x},y^*}}{s^2_{y^*}}\\ 
			&+\textbf{m}_\textbf{x}\frac{\textbf{S}^T_{\textbf{x},y^*}}{s^2_{y^*}}(y^*_t -m_{y^*}) +\textbf{S}_{\textbf{x},y^*}\frac{\textbf{m}^T_\textbf{x}}{s^2_{y^*}}(y^*_t -m_{y^*})\\
			&+\frac{\textbf{S}_{\textbf{x},y^*}\textbf{S}^T_{\textbf{x},y^*}}{(s^2_{y^*})^2}(y^{*2}_t -2m_{y*}y^*_t +m^2_{y*}) \Big)
			f_{y^*}(y^*_t) dy^*_t\\
			=&\textbf{m}_\textbf{x}\textbf{m}^T_\textbf{x} + \textbf{S}_{\textbf{x}} 
			-\frac{\textbf{S}_{\textbf{x},y^*}\textbf{S}^T_{\textbf{x},y^*}}{s^2_{y^*}}
			+\textbf{m}_\textbf{x}\frac{\textbf{S}^T_{\textbf{x},y^*}}{s^2_{y^*}}(m_{tr,y^*}-m_{y^*})\\
			&+\frac{\textbf{S}_{\textbf{x},y^*}\textbf{S}^T_{\textbf{x},y^*}}{(s^2_{y^*})^2}(m_{tr,y^{2*}} - 2m_{y^*}m_{tr,y^*} + m^2_{y^*})
			\end{align*}
			\begin{equation}
			+\textbf{S}_{\textbf{x},y^*}\frac{\textbf{m}^T_\textbf{x}}{s^2_{y^*}}(m_{tr,y^*}-m_{y^*}), \qquad \quad \,
			\label{joint_mean}
			\end{equation}
			
		\end{enumerate}
		where $m_{tr,y^{2*}}$ is the truncated second moment of  $y^*$ in the interval $(-\infty, a)$  \cite{wilhelm2012moments} and is given by
		\begin{equation}
		m_{tr,y^{2*}} = m^2_{y^*} - s^2_{y^*}a^*\frac{\phi(a^*)}{\Phi(a^*)} + m^2_{y^*} - 2m^2_{y^*}s_{y^*}\frac{\phi(a^*)}{\Phi(a^*)}.
		\label{sec_truncated}
		\end{equation}
		Thus, (\ref{joint_mean}) can be written by means of (\ref{trunc}) and (\ref{sec_truncated}) as 
		\begin{align*}
		E(\textbf{x}_t \textbf{x}^T_t| y_t = a) 
		=& \textbf{m}_\textbf{x}\textbf{m}^T_\textbf{x} + \textbf{S}_{\textbf{x}} 
		-\frac{\textbf{S}_{\textbf{x},y^*}\textbf{S}^T_{\textbf{x},y^*}}{s^2_{y^*}}
		-\textbf{m}_\textbf{x}\frac{\textbf{S}^T_{\textbf{x},y^*}}{s_{y^*}}\frac{\phi(a^*)}{\Phi(a^*)}   
		\end{align*}
		\begin{equation}
		\qquad \qquad \qquad\qquad \qquad \quad -\textbf{S}_{\textbf{x},y^*}\frac{\textbf{m}^T_\textbf{x}}{s_{y^*}}\frac{\phi(a^*)}{\Phi(a^*)}
		+\frac{\textbf{S}_{\textbf{x},y^*}\textbf{S}^T_{\textbf{x},y^*}}{s^2_{y^*}}\Big(1-a^*\frac{\phi(a^*)}{\Phi(a^*)}\Big).
		\label{joint}
		\end{equation}
		
		Then, we get by  (\ref{mean_x}) and (\ref{joint}) that
		\begin{equation}
		Cov(\textbf{x}_t - \hat{\textbf{x}}_t|y_t=a) = \textbf{S}_{\textbf{x}} - \textbf{S}_{\textbf{x},y^*}(s^2_{y^*})^{-1}
		\Big(a^*\frac{\phi(a^*)}{\Phi(a^*)} + \Big(\frac{\phi(a^*)}{\Phi(a^*)}\Big)^2   \Big)\textbf{S}^T_{\textbf{x},y^*}.
		\label{cov_x}
		\end{equation} 
	\end{proof}
	
	In the same way as presented in Proposition \ref{prop_a}, it can be proved that: 
	
	\begin{proposition}
		For a normally distributed multivariate random variable $(\textbf{X}_t,Y^*_t)$  with mean vector $\textbf{m}=[\textbf{m}_\textbf{x}, m_y^*]^T \; and \; $covariance matrix 
		$ \textbf{S} =
		\begin{bmatrix}
		\textbf{S}_\textbf{x}  & \textbf{S}_{\textbf{x},y^*} \\
		\textbf{S}_{\textbf{x},y^*} & s^2_{y^*}
		\end{bmatrix}$,
		the following statements hold:
		\begin{enumerate}
			\item $E(\textbf{x}_t|y^*_t\geq b) =  E(\textbf{x}_t|y_t = b) = \textbf{m}_\textbf{x} + \frac{\textbf{S}_{\textbf{x},y^*}}{s_{y^*}}\frac{\phi(b^*)}
			{1-\Phi(b^*)},$
			
			\item $Cov(\textbf{x}_t - \hat{\textbf{x}}_t|y_t=b) = \textbf{S}_{\textbf{x}_t} - \frac{\textbf{S}_{\textbf{x},y^*}\textbf{S}_{\textbf{x},y^*}^T}{s^2_{y^*}}\Big(b^*\frac{\phi(b^*)}
			{\Phi(b^*)} - \Big(\frac{\phi(b^*)}{\Phi(b^*)}\Big)^2\Big),$ 
		\end{enumerate}
		where $ b^* = \frac{b-m_{y^*}}{s_{y^*}} $. 
		\label{prop_b}
	\end{proposition}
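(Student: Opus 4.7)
The plan is to mirror, mutatis mutandis, the derivation of Proposition \ref{prop_a}, replacing the lower-tail truncation over $(-\infty,a)$ by the upper-tail truncation over $(b,+\infty)$. Starting from the third case of Lemma \ref{gpdf}, the conditional censored pdf at $y_t = b$ is $f^c_{\textbf{x}|y}(\textbf{x}_t \mid y_t = b) = f_{\textbf{x}}(\textbf{x}_t)\bigl(1-\Phi(\tfrac{b-m(\textbf{x}_t)}{s})\bigr)/(1-\Phi(b^*))$, and rewriting the numerator as $\int_b^{+\infty} f_{y^*|\textbf{x}}(y^*_t \mid \textbf{x}_t)\,dy^*_t$ gives the exact analogue of the identity used in the second line of the proof of Proposition \ref{prop_a}. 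This lets me express both $E(\textbf{x}_t \mid y_t = b)$ and $E(\textbf{x}_t\textbf{x}_t^T \mid y_t = b)$ as integrals of the already-known conditional moments $E(\textbf{x}_t \mid y^*_t)$ and $E(\textbf{x}_t\textbf{x}_t^T \mid y^*_t)$ against $f_{y^*}$ on $(b,+\infty)$.

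For part (1), substituting the upper-tail truncated mean
\[
m_{tr,y^*} = m_{y^*} + s_{y^*}\,\frac{\phi(b^*)}{1-\Phi(b^*)}
\]
in place of the formula (\ref{trunc}) and following the chain of equalities culminating in (\ref{mean_x}) directly yields the claimed expression for $E(\textbf{x}_t \mid y_t = b)$. The sign change relative to (\ref{mean_x}) reflects $\tfrac{d}{dy}(1-\Phi(y)) = -\phi(y)$, equivalently the reversed orientation of the truncation interval, and produces the $+$ sign in front of the Mills-ratio correction $\phi(b^*)/(1-\Phi(b^*))$.

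For part (2), I would decompose $Cov(\textbf{x}_t - \hat{\textbf{x}}_t \mid y_t = b) = E(\textbf{x}_t\textbf{x}_t^T \mid y_t = b) - E(\textbf{x}_t \mid y_t = b)E(\textbf{x}_t \mid y_t = b)^T$ and expand the first term exactly as in the long display leading to (\ref{joint_mean}), using the conditional-normal formula for $E(\textbf{x}_t\textbf{x}_t^T \mid y^*_t)$. The only new ingredient is the upper-tail truncated second moment on $(b,+\infty)$, which differs from (\ref{sec_truncated}) by a sign flip of the $a^*\phi(a^*)/\Phi(a^*)$ term and the replacement of the Mills-ratio linear correction with its upper-tail counterpart. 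Substituting and collecting terms, precisely as in the passage from (\ref{joint_mean}) to (\ref{cov_x}), then completing the square with the outer product coming from part (1), yields the stated covariance expression.

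The main, and essentially only, obstacle is sign bookkeeping: three sign changes accumulate in passing from the lower to the upper tail (the derivative of $1-\Phi$, the truncated-mean correction, and the $a^*$-linear term in the truncated second moment). Once these are tracked consistently, the algebra is structurally identical to Proposition \ref{prop_a}, so I would organize the write-up as a one-line invocation of that parallel structure together with an explicit display of the two truncated-moment formulas on $(b,+\infty)$, leaving the repeated integrations to be dispatched by reference.
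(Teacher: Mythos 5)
Your strategy is exactly the one the paper intends: Proposition \ref{prop_b} is introduced with the remark that it can be proved ``in the same way'' as Proposition \ref{prop_a}, and your plan of mirroring that argument with the truncation interval $(b,+\infty)$ in place of $(-\infty,a)$ is the right one. Part (1) checks out: the upper-tail truncated mean $m_{tr,y^*}=m_{y^*}+s_{y^*}\,\phi(b^*)/(1-\Phi(b^*))$ is correct, and substituting it into the chain of equalities culminating in (\ref{mean_x}) gives precisely the stated conditional mean.

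The gap is in part (2), where you assert that collecting terms ``yields the stated covariance expression'' without carrying the algebra through --- it does not. Writing $\lambda=\phi(b^*)/(1-\Phi(b^*))$, the upper-tail truncated variance is $Var(y^*_t\mid y^*_t>b)=s^2_{y^*}(1+b^*\lambda-\lambda^2)$, and the decomposition $\textbf{x}_t=E(\textbf{x}_t\mid y^*_t)+\textbf{u}$ with $\textbf{u}$ independent of $y^*_t$ (equivalently, the route through the truncated second moment that you propose) gives
\[
Cov(\textbf{x}_t-\hat{\textbf{x}}_t\mid y_t=b)=\textbf{S}_{\textbf{x}}-\frac{\textbf{S}_{\textbf{x},y^*}\textbf{S}_{\textbf{x},y^*}^T}{s^2_{y^*}}\Big(\lambda^2-b^*\lambda\Big),
\]
which differs from the printed statement in two ways: the denominators must be $1-\Phi(b^*)$ rather than $\Phi(b^*)$, and the bracket carries the opposite sign. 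The printed version cannot be correct: at $b^*=0$ its bracket equals $-\big(\phi(0)/\Phi(0)\big)^2<0$, making the conditional covariance \emph{exceed} $\textbf{S}_{\textbf{x}}$, whereas conditioning on the truncation event must shrink the variance along $\textbf{S}_{\textbf{x},y^*}$; indeed $\lambda(\lambda-b^*)>0$ for every $b^*$ by the Mills-ratio inequality $\lambda(b^*)>b^*$. Your final paragraph shows you knew the sign bookkeeping was the danger point, but the write-up stops just short of the computation that would have exposed the discrepancy, so as written you claim to derive a formula that your own (correct) method does not produce. Note also that the corresponding update $G[i,i]$ in Algorithm \ref{algo2} uses the bracket with the signs reversed relative to the proposition, corroborating that the statement as printed contains a typo; a complete proof should derive the formula above and flag the correction.
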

	
	We notice that the random vector $  (\textbf{x}_t|y_t = a) $ is not normally distributed; nevertheless, the normality could be accepted  for various values of the censoring limit $ a $ and covariance matrix, $\textbf{S}_{\textbf{x},y^*} $. More precisely, this normality condition can be accepted  if the value of a is high enough while the correlation coefficient $r_{x,y}$ is low. To illustrate this statement, we consider the following example.
	
	Let $X$ and $Y$ $\sim N(0,1)$, then, in Table \ref{KS_test} the results of K-S tests for various values of censoring limits $a$ and correlation coefficients $r_{x,y}$ are presented; the sample size of K-S test is $n=1000$. In particular, the values of $a$ and $r_{x,y}$ are considered in the intervals $\big[-3.00,2.95\big]$ and $\big[0.05,0.95\big]$ with steps $0.35$ and $0.10$, respectively. As  can be seen in Table \ref{KS_test}, if $r_{x_i,y} \leq 0.75$ then for every value of $a$ the null hypothesis  $H_0:$ $(X|y=a) \sim N(\mu, \sigma^2)$ cannot be rejected, while for very high values of $r_{x,y}$ the null hypothesis has to be rejected. Thus, for our example, if $r_{x_i,y} \leq 0.75$ , we can accept that the distribution function $f(\textbf{x}|y=a)$  be approximated by a normal distribution with mean vector (\ref{mean_x}) and covariance matrix (\ref{cov_x}). We can get analogous results for the pdf $f(\textbf{x}|y=b)$. 
	
	Next, in Fig. \ref{fig_fxy} the distribution function  $f(x|y=-1.60)$ for $r_{x,y}= 0.75$ and $0.85$, respectively, is presented. Concerning the normality, notice that the conditional pdf $f(x|y=-1.60)$  for $r_{x,y}= 0.85$ is not symmetric, while $f(x|y=-1.60)$ for $r_{x,y}= 0.75$  represents approximately a normal distribution. 
	
	\begin{table}[ht]
		\caption{ K-S tests for the  hypothesis, $H_0:$ $(X|y=a) \sim N(\mu, \sigma^2)$, where 0 and 1 represent  acceptance and non-acceptance of $H_0$, respectively}
		\begin{center}
			\begin{tabular}{|c|c|c|c|} 
				
				\hline
				
				\diagbox{$a$~}{${r}_{x,y}$~~} & \big[\textbf{0.05},\textbf{0.75}\big] & \textbf{0.85 } & \textbf{0.95}\\
				
				\hline 
				\rule{0pt}{3ex}    
				\big[\textbf{-3.00},\textbf{-2.30}\big] & 0 & 0 & 1\\
				
				\hline
				\rule{0pt}{3ex}   
				\big[\textbf{-1.95},\textbf{0.85}\big] & 0 & 1 & 1\\
				
				\hline
				\rule{0pt}{3ex} 
				\textbf{1.20} & 0 & 0 & 1\\
				
				\hline
				\rule{0pt}{3ex}   
				\big[\textbf{1.55}, \textbf{2.95}\big]&  0 & 0 & 0\\
				\hline
				
			\end{tabular}
		\end{center}
		
		\label{KS_test}
	\end{table}
	
	\begin{figure}
		\centering
		\includegraphics[width=11cm]{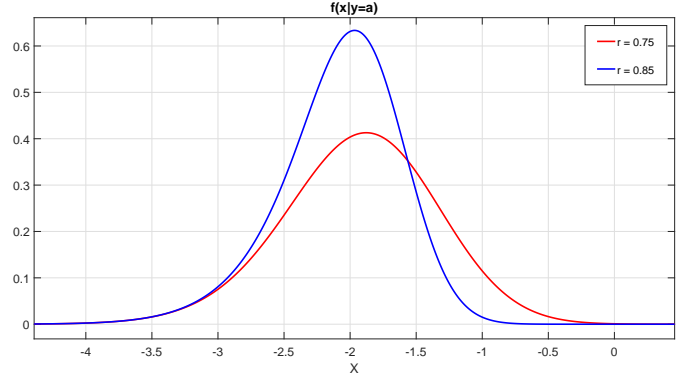}
		\caption{The distribution function $f(x|y=-1.60)$ for $r_{x,y}$= $0.75$ and $0.85$, respectively.}
		\label{fig_fxy}
	\end{figure}
	
	\subsection{The Proposed Model}
	
	The standard KF process consists of two stages: a) the predict stage and  b) the update stage. In the predict stage only the last state vector estimation is used, $\hat{\textbf{x}}_{t-1}$, in order to calculate the a priori estimation, $\hat{\textbf{x}}_{t}^-$ by (\ref{x}). The state vector at time ${t-1}$ given the  measurements up to time ${t-1}$, ${\textbf{x}}_{t-1}|Y_{t-1}$, is  normally distributed, and then by (\ref{x}), it is clear that ${\textbf{x}}_{t}|Y_{t-1}$ is normally distributed. In the censored KF process described by (\ref{censored})-(\ref{y_lat}), ${\textbf{x}}_{t-1}|Y_{t-1}$ is not normally distributed (see  Lemma \ref{gpdf}) when the last  measurement, $y_{t-1}$, belongs into the uncensored area; nevertheless, as can be seen in Table \ref{KS_test}, if the value of the corelation coefficient $r_{x_i,y}$ is not high, it can be accepted that $f^c_{\textbf{x}|y}({\textbf{x}}_{t-1}|y_{t-1}=a)$ is (approximately) normal. Therefore, as in the vanilla KF process,  the a priori state estimation and the corresponding  covariance matrix of the error of the a priori estimation, $ \textbf{P}_t^-$, are given by 
	
	\begin{equation}
	\hat{\textbf{x}}^-_t = \textbf{A}\hat{\textbf{x}}_{t-1}, 
	\label{x_apriori}
	\end{equation}
	\begin{equation}
	\textbf{P}^-_t = \textbf{A}\textbf{P}_{t-1}\textbf{A}^T+\textbf{Q}_t,
	\label{P_apriori}
	\end{equation}
	where $\hat{\textbf{x}}_{t-1}$ and $\textbf{P}_{t-1}$ are calculated at the previous step, $t-1$.
	
	At the next step the latent measurement, $y^*_t$, is used in order to update $\hat{\textbf{x}}_{t-1}$. In the case that $y^*_t$ belongs into the uncensored region $(a,b)$, we have that   $\textbf{x}_t|Y_t$ is normally distributed (see (\ref{uncensored})). Therefore, the a posteriori estimation, $\hat{\textbf{x}}_{t}$, and the corresponding error covariance matrix, $\textbf{P}_{t}$, can be calculated by the standard KF process in optimal way (i.e., unbiased and minimum variance estimation are provided). Thus, the KF algorithm has to be updated for the case where the measurements are censored; to that end, we utilize  Propositions \ref{prop_a} and \ref{prop_b}.  
	
	In the case where $y_t = a$, it is derived  by Proposition \ref{prop_a} and the state-space model (\ref{x}), (\ref{y}) that 
	\begin{equation}
	\textbf{x}_t|Y_{t-1} \sim N(\hat{\textbf{x}}_t^-,\textbf{P}_t^-)    
	\end{equation}
	and
	\begin{equation}
	{y}^*_t|Y_{t-1} \sim N(\textbf{H}\hat{\textbf{x}}_t^-,\textbf{H}\textbf{P}_t^-\textbf{H}^T + r_t). 
	\label{y_cond}
	\end{equation}
	Then, for $\textbf{m}_\textbf{x} =  \hat{\textbf{x}}^-_{t}$,  $\textbf{S}_{\textbf{x}_t} = \textbf{P}_t^-$, $s_{y^*} = (\textbf{H}\textbf{P}_t^-\textbf{H}^T + r^2_t)\in\Re^+ $ and $\textbf{S}_{\textbf{x},y^*} = \textbf{P}_t^-\textbf{H}^T$, we get by Proposition \ref{prop_a} that
	\begin{equation} 
	\hat{\textbf{x}}_{t} = \hat{\textbf{x}}^-_{t} - \frac{\textbf{P}_t^-\textbf{H}^T}{(\textbf{H}\textbf{P}_t^-\textbf{H}^T + r^2_t)^{1/2}}\frac{\phi(a^*)}{\Phi(a^*)}
	\label{x_a} 
	\end{equation}
	and 
	\begin{equation}
	\textbf{P}_t = \textbf{P}_t^- - \frac{\textbf{P}_t^-\textbf{H}^T\textbf{H}\textbf{P}_t^-}{\textbf{H}\textbf{P}_t^-\textbf{H}^T + r^2_t}\Big(a^*\frac{\phi(a^*)}
	{\Phi(a^*)} + \Big(\frac{\phi(a^*)}{\Phi(a^*)}\Big)^2\Big) 
	\label{P_a}.
	\end{equation}
	In the same way, when  $y_t=b$, it is derived that, 
	
	\begin{equation}
	\hat{\textbf{x}}_{t} = \hat{\textbf{x}}^-_{t} + \frac{\textbf{P}_t^-\textbf{H}^T}{(\textbf{H}\textbf{P}_t^-\textbf{H}^T + r^2_t)^{1/2}}\frac{\phi(b^*)}{1-\Phi(b^*)}
	\label{x_b}
	\end{equation}
	and 
	\begin{equation}
	\textbf{P}_t = \textbf{P}_t^- - \frac{\textbf{P}_t^-\textbf{H}^T\textbf{H}\textbf{P}_t^-}{\textbf{H}\textbf{P}_t^-\textbf{H}^T + r^2_t}\Big(b^*\frac{\phi(b^*)}
	{\Phi(b^*)} - \Big(\frac{\phi(b^*)}{\Phi(b^*)}\Big)^2\Big)
	\label{P_b},	
	\end{equation}
	where $a^* = \frac{a-\textbf{H}\hat{\textbf{x}}_t^-}{(\textbf{H}\textbf{P}_t^-\textbf{H}^T + r^2_t)^{1/2}}$ and 
	$b^* = \frac{b-\textbf{H}\hat{\textbf{x}}_t^-}{(\textbf{H}\textbf{P}_t^-\textbf{H}^T + r^2_t)^{1/2}}$.

	In dealing with real data and censored measurements, the measurement noise of the latent measurement $ y^*_t$, is usually unknown. In order to overcome this problem, we adopt the assumption that the latent measurement noise is normally distributed (\ref{y_lat}) (white noise with constant variance $r^2_t = r^2$). Then, we can estimate $r^2$ by means of the likelihood function of the censored measurements $ y^*_t$.
	The likelihood function for the censored measurements $ \{y_{t}\}_{t=1}^{T} $ given in (\ref{censored}), can be calculated by (\ref{y_cond}) as
	
	\begin{align}
	f(y_t|y_{t-1}) =& f_{y^*_t|y_{t-1}}(y_t)\cdot u_{(a,b)}(y_t) \nonumber\\ 
	&+ \Phi\Bigg(\frac{a-\textbf{H}\hat{\textbf{x}}^-_{t}}{(\textbf{H}\textbf{P}^-_{t}\textbf{H}^T+r^2)^{1/2}}\Bigg) \cdot \delta(y_t-a) \nonumber\\ 
	&+\Bigg(1-\Phi \Bigg(    \frac{b-\textbf{H}\hat{\textbf{x}}^-_{t}}{(\textbf{H}\textbf{P}^-_{t}\textbf{H}^T+r^2)^{1/2}} \Bigg)\Bigg) \cdot \delta(y_t-b) .
	\label{fy_con}
	\end{align}
	Then we get by (\ref{fy_con}), the following Lemma:
	
	\begin{lemma}
		The likelihood function of the censored normal distribution is given by 
		\begin{align}
		L(r^2|\textbf{y}) =& {\displaystyle\prod_{ a <~ y_{t} <~b} \frac{1}{(\textbf{H}\textbf{P}^-_{t}\textbf{H}^T+r^2)^{1/2}} \phi    \Bigg(\frac{y_{t}-\textbf{H}\hat{\textbf{x}}^-_{t}}{(\textbf{H}\textbf{P}^-_{t}\textbf{H}^T+r^2)^{1/2}} \Bigg) } \nonumber\\  
		&\times{\displaystyle\prod_{y_{t}=a}\Phi\Bigg(\frac{a-\textbf{H}\hat{\textbf{x}}^-_{i}}{(\textbf{H}\textbf{P}^-		_{t}\textbf{H}^T+r^2)^{1/2}}\Bigg)}  \nonumber\\ 
		&\times{\displaystyle\prod_{y_{t}=b}\Bigg(1-\Phi \Bigg(    \frac{b-\textbf{H}\hat{\textbf{x}}^-_{t}}{(\textbf{H}\textbf{P}^-_{t}\textbf{H}^T+r^2)^{1/2}} \Bigg)\Bigg)}.
		\label{likehood}
		\end{align}
		
	\end{lemma}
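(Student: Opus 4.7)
The plan is to derive $L(r^2 \mid \textbf{y})$ by expressing the joint likelihood of the observed sample $\{y_t\}_{t=1}^{T}$ as a telescoping product of one-step predictive distributions and then evaluating each factor by means of the formula (\ref{fy_con}) for the conditional law of $y_t$ given $Y_{t-1}$. Concretely, I would first write $L(r^2 \mid \textbf{y}) = \prod_{t=1}^{T} f(y_t \mid Y_{t-1})$, which is valid for the mixed discrete/continuous distribution carried by the censored observations provided the reference measure is taken to be Lebesgue measure on the open interval $(a,b)$ together with counting measure on the atoms $\{a,b\}$.

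Next, I would invoke the Kalman prediction step to identify each factor explicitly. By (\ref{x_apriori})--(\ref{P_apriori}), $\textbf{x}_t \mid Y_{t-1} \sim N(\hat{\textbf{x}}_t^-, \textbf{P}_t^-)$, and combining this with the latent measurement equation (\ref{y_lat}) yields $y_t^* \mid Y_{t-1} \sim N(\textbf{H}\hat{\textbf{x}}_t^-, \textbf{H}\textbf{P}_t^-\textbf{H}^T + r^2)$, as already recorded in (\ref{y_cond}). Substituting this normal law into the three branches of (\ref{fy_con}) gives, for each $t$: (i) if $a < y_t < b$, a factor equal to $(\textbf{H}\textbf{P}_t^-\textbf{H}^T + r^2)^{-1/2}\,\phi\big((y_t - \textbf{H}\hat{\textbf{x}}_t^-)/(\textbf{H}\textbf{P}_t^-\textbf{H}^T + r^2)^{1/2}\big)$; (ii) if $y_t = a$, the atom probability $\Phi\big((a - \textbf{H}\hat{\textbf{x}}_t^-)/(\textbf{H}\textbf{P}_t^-\textbf{H}^T + r^2)^{1/2}\big)$; and (iii) if $y_t = b$, the atom probability $1 - \Phi\big((b - \textbf{H}\hat{\textbf{x}}_t^-)/(\textbf{H}\textbf{P}_t^-\textbf{H}^T + r^2)^{1/2}\big)$.

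Finally, I would partition the index set $\{1,\ldots,T\}$ into three disjoint groups according to which of the three regions $y_t$ falls into, split the global product accordingly, and read off the three factors appearing in (\ref{likehood}). The only subtlety, rather than a genuine obstacle, is the bookkeeping of the Dirac deltas in (\ref{fy_con}): they should not be treated as densities to be evaluated at $y_t$, but as flags indicating that the correct likelihood contribution at that atom is the associated probability mass. Once this interpretation is made explicit, the stated expression is simply the result of grouping like terms in the overall product.
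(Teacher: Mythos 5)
Your proposal is correct and follows essentially the same route as the paper: the paper obtains the lemma directly by taking the product over $t$ of the one-step predictive law (\ref{fy_con}), with the normal parameters supplied by (\ref{y_cond}), exactly as you do. Your explicit remarks about the prediction-error factorization, the mixed reference measure, and reading the Dirac terms as point masses merely make precise what the paper leaves implicit.
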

	
	\begin{remark}
		We notice that in  \cite{allik2014tobit}, the term  $\textbf{H}\textbf{P}^-_{t}\textbf{H}^T$ which appears in the denominators of (\ref{likehood}), is omitted (not considered); apparently this term is very important in order to estimate accurately the parameter $r^2$. The likelihood function given in \cite{allik2014tobit}, approximates (\ref{likehood}), in the cases where the observation matrix $\textbf{H}$ is equal to the identity matrix and $\textbf{P}_t^-$ is close to the null matrix.  
	\end{remark} 
	
	Next, the process of the proposed Censored Kalman Filter (CKF) is presented, where  $T$ in Algorithm \ref{algo} denotes the total number of the measurements. As can been stated by Algorithm \ref{algo}, the proposed method is  recursive and computationally inexpensive. More specifically, the proposed method has a similar computational burden to the standard KF \cite{tippett2003ensemble} and TKF \cite{allik2016tobit}, making it practical in computation-limited environments. 
	
	\begin{algorithm}[tbh] 
		\caption {Censored Kalman Filter}
		\begin{algorithmic}[1]
			\State $\textbf{x}_0\gets \textbf{0}_n$
			\State $\textbf{P}_0\gets \textbf{0}_{n \times n}$
			\For{\texttt{t=1:T}}
			\State  $\hat{\textbf{x}}_t^- \gets \textbf{A}\hat{\textbf{x}}_{t-1} $ 
			\State  $\textbf{P}_t^- \gets \textbf{A}\textbf{P}_{t-1}\textbf{A}^T +\textbf{Q}_t$
			\If {$y_t\in (a,b)$}
			\State \texttt{Utilize vanilla KF to compute $\hat{\textbf{x}}_t$ and $\textbf{P}_t$.}
			\ElsIf {$y_t = a$}
			\State $\hat{\textbf{x}}_t  \gets$  \texttt{Update using}  $ (\ref{x_a}) $ 
			\State  $\textbf{P}_t  \gets$ \texttt{Update using} $ (\ref{P_a})$
			\ElsIf {$y_t = b$}
			\State $\hat{\textbf{x}}_t  \gets$  \texttt{Update using} $ (\ref{x_b}) $ 
			\State $\textbf{P}_t \gets$  \texttt{Update using} $ (\ref{P_b})$
			\EndIf
			\EndFor
		\end{algorithmic}
		\label{algo}
	\end{algorithm}
	
	The Algorithm \ref{algo} can be generalised in the case of multidimensional measurements, $\textbf{y}^*_t = \{y^*_{t,i}\}^m_{i=1}$, when their coordinates are uncorrelated (Algorithm \ref{algo2}). The scope of this generalization in the multidimensional case is to provide a more computational efficient algorithm. The notation of Algorithm \ref{algo2} is as follow, $\textbf{a}$ and $\textbf{b}$ are the censored limits, $ \textbf{I}_m $ the unit matrix of size $ m $ and $\textbf{R}_t$ is the covariance matrix of measurement error, which in our case is diagonal since the coordinates of the measurement are uncorrelated.
	\begin{algorithm}[tbh] 
		\caption {Censored Kalman Filter Multidimensional Case}
		\begin{algorithmic}[1]
			\State $\textbf{x}_0\gets \textbf{0}_n$
			\State $\textbf{P}_0\gets \textbf{0}_{n \times n}$
			
			\For{\texttt{t=1:T}}
			\State $\textbf{G} \gets \textbf{I}_m$ 
			\State $\textbf{E}\gets \textbf{0}_m$
			\State  $\hat{\textbf{x}}_t^- \gets \textbf{A}\hat{\textbf{x}}_{t-1} $ 
			\State  $\textbf{P}_t^- \gets \textbf{A}\textbf{P}_{t-1}\textbf{A}^T +\textbf{Q}_t$
			\State $\textbf{S}_1 \gets \textbf{P}_t^-\textbf{H}^T$
			\State $\textbf{S}_2 \gets \textbf{H}\textbf{P}_t^-\textbf{H}^T + \textbf{R}_t$
			\State $\textbf{K} \gets \textbf{S}_1\textbf{S}_2 $
			\For{\texttt{i=1:m}}
			\If {$y_{t,i}\in (a_i,b_i)$}
			\State $ E[i] \gets y_{t,i} - (\textbf{H}\textbf{x}^-_t)_i$ 
			\ElsIf {$y_{t,i} = a_i$}
			\State $ E[i] \gets -sqrt(S_2[i,i])\phi(a^*_i)/\Phi(a^*_i) $ 
			\State $ G[i,i] \gets a^*_i\phi(a^*_i)/\Phi(a^*_i) + \big(\phi(a^*_i)/\Phi(a^*_i)\big)^2 $
			\ElsIf {$y_{t,i} = b_i$}
			\State $ E[i] \gets sqrt(S_2[i,i])\phi(b^*_i)/(1-\Phi(b^*_i)) $
			\State $ G[i,i] \gets -b^*_i\phi(b^*_i)/\Phi(b^*_i) + \big(\phi(b^*_i)/\Phi(b^*_i)\big)^2 $
			\EndIf
			\EndFor
			\State $\hat{\textbf{x}}_t \gets \hat{\textbf{x}}_t^- +\textbf{K}\textbf{E}$
			\State $\textbf{P}_t \gets \big(\textbf{I}_n - \textbf{K}\textbf{G}\textbf{H}\big)\textbf{P}_t^-$
			\EndFor
		\end{algorithmic}
		\label{algo2}
	\end{algorithm}
	
	\section{Experiments}
	
	In this section we conduct two sets of experiments-simulations (the same experiments as in \cite{allik2016tobit} and \cite{loumponias2018adaptive}) to evaluate the performance of the proposed method,  
	CKF, in comparison to KF and TKF. The first simulation concerns a damping oscillator and a simple oscillator (without damping). In the state-space models of the two oscillators, the same process and measurement noise is added. The second simulation deals with the problem of MOT using video sequences with static camera from the MotChallenge 2015 training database \cite{2DMOT2015}.

	\subsection{Oscillators}
	
	In the  experimental sets, we present a motivating example of tracking a sinusoidal model by a KF, TKF and CKF, when the measurements are saturated. The state space equations have the form (\ref{x}), with 
	
	\[
	\textbf{A}=c \cdot
	\begin{bmatrix}
	cos(\omega)&-sin(\omega)\\
	sin(\omega)&\quad cos(\omega)
	\end{bmatrix},
	\]
	and
	\[\textbf{H}=
	\begin{bmatrix}
	1&0
	\end{bmatrix},\]
	where $ \omega=0.005\,2\pi$ and $c \in R$. This simulation shows a tracking ability with a known model and unknown disturbance that enters the system through $ \textbf{w}_k$. In this example, the disturbance $ \textbf{w}_k $ is normally distributed, $\textbf{w}_k\sim N(\textbf{0},\textbf{Q}) $, where
	\[
	\textbf{Q}=
	\begin{bmatrix}
	0.05^2&0\\
	0&0.05^2
	\end{bmatrix},
	\]
	and the measurement noise, $ v_k $, is normally distributed, $v_k\sim N(0, r^2)$, with $r^2 = 0.5$. The initial state vector is  $ \textbf{x}_0=[5\quad 0]^T $ with covariance matrix $ \textbf{P}_0 = \textbf{I}_{2} $, and the  censored limits are  $ a = -0.5 $ and $ b = 0.5 $. Then, by the above example we produce censored (saturated) measurements, $y_k$, where $k=1,2,...,1000$.
	
	In our first experiment, we set $c=0.999$ (damped oscillator) and repeat the above procedure 100 times in order for the results to be more reliable. It is worth noticing that in real applications with censored data, the variance of the noise measurement, $r^2$, is not available, since the measurements are partially known. 
	In the proposed method in order to estimate the state vectors $\textbf{x}_t$, they are utilised only the estimations of $r^2$ calculated through the likelihood function (\ref{likehood}). Then, it is derived that the average estimate of $r^2$ is 0.51 with standard deviation 0.07. Therefore, the estimations of $r^2$ are very close to the real value $r^2 =0.5$. 
	
	Next, the filters' root-mean-squared errors (RMSEs) for each of the 100 iterations are calculated. The means of the filters' RMSEs for the iterations are presented in Table \ref{tab:rmse_1}, where we provide separate RMSEs for the two  estimated coordinates of the state vector, $ \textbf{x}_k$. The results show that KF (red coloured in Fig. \ref{a_0999}) has a very low performance and fails to cope with censored measurements, since KF considers them to be non-censored. TKF (orange) improves the results of KF, however, it does not provide optimal estimates, since its estimates are a linear combination of the a piori estimation and the censored measurement. CKF (purple) exhibits the best performance due to the fact that it provides Bayesian estimates assuming that $\textbf{x}_t|y_{t-1} \sim N(\hat{\textbf{x}}_t^-,\textbf{P}_t^-)$ (Fig. \ref{a_0999}). The computational costs for CKF, KF and TKF are 1.875 s, 1.872 s and 1.978 s (CPU: i5-3380M), respectively. The computational costs of KF and CKF are almost the same, since their procedures are identical when a measurement is not censored.        
	
	\begin{table}[h!]
		\caption{The means of the RMSEs for the filters KF, TKF and CKF,  for c=0.999.}
		\renewcommand{\arraystretch}{1.3}
		\begin{center}
			\begin{tabular}{ |c|c|c| }
				\hline
				\textbf{Filter}  & \textbf{Mean RMSE of} $\hat{\textbf{x}}_1 $ &  \textbf{Mean RMSE of} $\hat{\textbf{x}}_2 $ \\
				\hline
				KF &  2.0320 & 2.0431 \\
				TKF & 0.4431 & 0.5480 \\
				\textbf{CKF} & \textbf{0.3749} & \textbf{0.4966} \\
				\hline
			\end{tabular}
		\end{center}
		
		\label{tab:rmse_1}
	\end{table}
	
	\begin{figure}[h!]
		\centering
		\includegraphics[width=15cm]{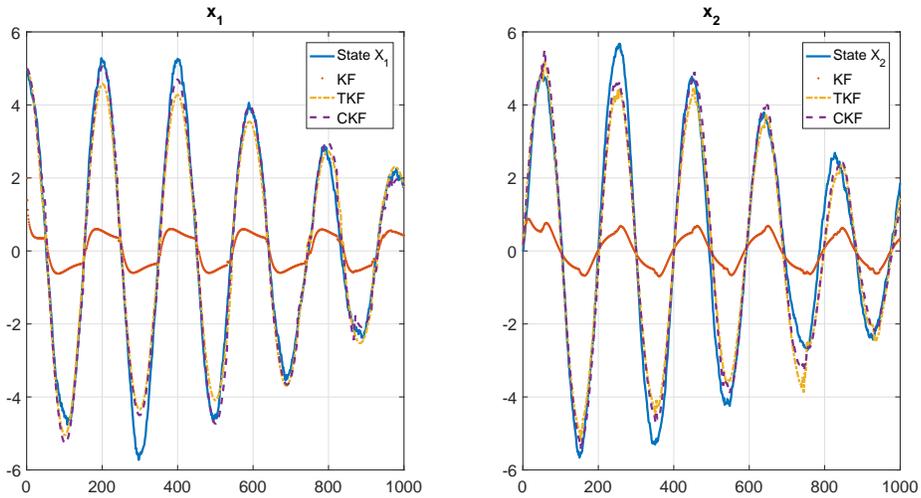}
		\caption{The state vector \textbf{x} and the estimations via KF, TKF and CKF,  for c=0.999.}
		\label{a_0999}
	\end{figure}
	
	In the second experiment, we set $c=1$ and repeat 100 times the evaluation procedure in the same way as in the first experiment. In Table \ref{tab:rmse_2}  the means of filters' RMSEs for the 100 iterations are presented.  It can be observed that KF has a poor performance, since is does not take into account censored measurements, while our proposed model, CKF, outperforms KF and TKF as in the first experiment (Fig. \ref{a_1}). 
	
	\begin{table}[h!]
		\caption{The means of the RMSEs for the filters KF, TKF and CKF, for c=1.}
		\renewcommand{\arraystretch}{1.3}
		\begin{center}
			\begin{tabular}{ |c|c|c| }
				\hline
				\textbf{Filter}  & \textbf{Mean RMSE of} $\hat{\textbf{x}}_1 $ &  \textbf{Mean RMSE of} $\hat{\textbf{x}}_2 $ \\
				\hline
				KF &  3.2149 & 3.2167 \\
				TKF & 0.6469 & 0.7202 \\
				\textbf{CKF} & \textbf{0.5489} & \textbf{0.6329} \\
				\hline
			\end{tabular}
		\end{center}
		
		\label{tab:rmse_2}
	\end{table}
	
	\begin{figure}[h!]
		\centering
		\includegraphics[width=15cm]{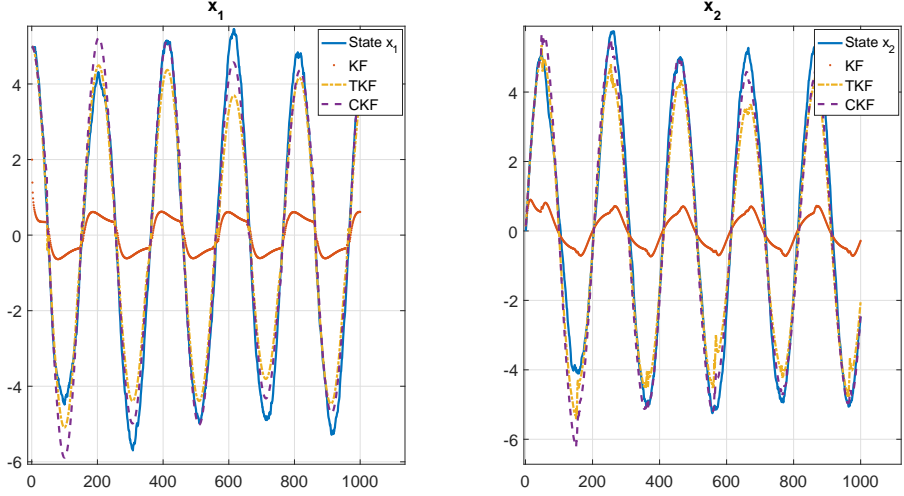}
		\caption{The state vector \textbf{x} and the estimations via KF, TKF and CKF, for c=1.}
		\label{a_1}
	\end{figure}
	
	Furthermore, we notice that the RMSEs are lower in the first experiment (damping oscillator) than in the second experiment (simple oscillator) due to the damping. More specifically, the probabilities of the measurements to lie in the uncensored region are higher in the case of the  damping oscillator than that one of the simple oscillator. Therefore, the RMSEs are increased for the three methods. However, we get by CKF  the lowest increase in RMSEs compared to KF and TKF, pointing out that the proposed method is more robust in the presence of censored measurements.

	\subsection{Multi-Object Tracking}
	
	Multi-Object Tracking (MOT) \cite{andriluka2008people} provides the estimation of the location of multiple objects in each time-frame of a video sequence while preserving the identity of the objects. One of the most well known frameworks that has been proposed to cope with MOT is "tracking-by-detection", where individual object detections are linked to form trajectories of the detected objects \cite{andriluka2008people}. The proposed method for MOT is based on three steps, as presented in \cite{loumponias2018adaptive}. In the third step the CKF process is used instead of the TKF. 
	
	Next, we outline  the proposed MOT process. 
	
	In the first step, any detections (i.e., detections of bounding-boxes coordinates with low confidence), are rejected in order to avoid false alarms. In addition, the non maximum suppression (NMS) algorithm \cite{neubeck2006efficient} is used in order to avoid multiple overlapping detections.  
	
	In the second step, the detections $\textbf{y}^*_t=[y^*_{t,1},y^*_{t,2},y^*_{t,3},y^*_{t,4}]$, concerning the four bounded-boxes coordinates at time frame $t$, are  associated with the predicted objects' positions, $\textbf{x}^-_t$, which are derived by (\ref{x_apriori}) and (\ref{P_apriori}). To that end, an assignment cost matrix is computed, as the intersection-over-union ($ IOU $) (see \cite{bewley2016simple}) between each detection and all predicted objects' positions from the existing targets. For the assignment between tracker and detection, the Hungarian algorithm \cite{jonker1986improving} is used. The  state vector
	$\textbf{x}_t=[x_{t,1},x_{t,2},x_{t,3},x_{t,4},\dot{x}_{t,1},\dot{x}_{t,2},\dot{x}_{t,3},\dot{x}_{t,4}]^T $ depicts the position coordinates of the object as long as the associated velocity coordinates at time frame $t$. The transition matrix $\textbf{A}$ of the CKF process takes on the form
	\begin{equation*}
	\textbf{A}=
	\begin{bmatrix}
	1&0&0&0&\frac{1}{fps}&0&0&0 \\
	0&1&0&0&0&\frac{1}{fps}&0&0 \\
	0&0&1&0&0&0&\frac{1}{fps}&0 \\
	0&0&0&1&0&0&0&\frac{1}{fps}\\
	0&0&0&0&1&0&0&0 \\
	0&0&0&0&0&1&0&0\\
	0&0&0&0&0&0&1&0\\
	0&0&0&0&0&0&0&1
	\end{bmatrix},
	\label{transition}
	\end{equation*}  
	where $ fps $ stands for the number of frames per second at each video sequence. The covariance matrix of the error process $ \textbf{Q}$ is taken to be, 
	\begin{equation*}
	\label{matrix_Q}
	\textbf{Q}=100*
	\begin{bmatrix} 
	1/4&0&0&0&1/2&0&0&0\\
	0&1/4&0&0&0&1/2&0&0\\
	0&0&1/4&0&0&0&1/2&0\\
	0&0&0&1/4&0&0&0&1/2\\
	1/2&0&0&0&1&0&0&0\\
	0&1/2&0&0&0&1&0&0\\
	0&0&1/2&0&0&0&1&0\\
	0&0&0&1/2&0&0&0&1
	\end{bmatrix}.
	\end{equation*} 	
	
	In the final step, the predicted object's position, $\textbf{x}^-_t$, is updated by taking into account the assigned censored detection, $\textbf{y}_t$, which is derived by (\ref{censored}) with adaptive censoring limits $\textbf{a}$ and $\textbf{b}$. More specifically, the censoring limits for each coordinate of the latent detection $\textbf{y}^*_t$ are given by  	
	\begin{equation}
	\label{limits_low}
	a_i = (\textbf{H}\hat{\textbf{x}}^-_t)_i - c, \qquad i=1,...,4 \;,
	\end{equation} 
	and	
	\begin{equation}
	\label{limits_up}
	b_i = (\textbf{H}\hat{\textbf{x}}^-_t)_i + c, \qquad i=1,...,4 \;,
	\end{equation}
	where $c$ is a positive constant and $\textbf{H}$ is the observation matrix which has the form
	\begin{equation*}\label{observation}
	\textbf{H}=
	\begin{bmatrix}
	1&0&0&0&0&0&0&0 \\
	0&1&0&0&0&0&0&0 \\
	0&0&1&0&0&0&0&0 \\
	0&0&0&1&0&0&0&0\\
	\end{bmatrix} \;.
	\end{equation*}
	The covariance matrix of the measurement error, $\textbf{R}_t$ is reasonable to be set inversely proportional to  a confidence value, $z_t$, which is provided for each detection; thus $ \textbf{R}_t $ is defined (experimentally) as,
	\begin{equation*}\label{R}
	\textbf{R}_t=
	\begin{bmatrix}
	81&0&0&0\\
	0&81&0&0\\
	0&0&81&0\\
	0&0&0&81\\
	\end{bmatrix}\cdot(1-\frac{z_t}{140})\;.
	\end{equation*}   
	Finally,  Algorithm \ref{algo2} is utilised to estimate the positions of the objects, i.e., the state vectors  $\hat{\textbf{x}}_t $. In what follows we let the parameter $c$ in (\ref{limits_low}) and (\ref{limits_up})  take different values. Apparently as  $c$ increases, then CKF and TKF converge to the standard KF; consequently, the performance of CKF and TKF is the same with the  performance KF for big values of $c$.
	
	Next, the following evaluation metrics concerning the performance, defined in \cite{bernardin2008evaluating}, are used:
	\begin{itemize}
		\item MOTA: Multi-object tracking accuracy. 
		\item MOTP: Multi-object tracking precision.
		\item FA: The average number of false alarms per frame.	
		\item FP: The total number of false positive detections.
		\item FN: The total number of false negative detections.
		\item ID sw: The total number of times an ID switches to a different previously tracked object.
		\item Frag: The total number of fragmentations where a track is interrupted by miss- detection.
		\item Hz: Processing speed (in frames per second excluding the detector) on the benchmark.
	\end{itemize}  
	The metric  MOTA  allows for objective comparison of the main characteristics of tracking systems, such as the accuracy in recognizing object configurations and the ability to consistently tracking objects over time.

	In Table \ref{mot_train} the proposed method,  CKF, is compared against  KF and TKF, on the training dataset, 2D MOT 2015 (static camera) \cite{2DMOT2015}.
	As shown in Table \ref{mot_train}, the proposed method for  $c=15$, outperforms TKF and KF in all metrics but one (Hz). More specifically, the highest MOTA value (equal to 32.8) is achieved by CKF for $c=15$ and the highest Hz value is achieved by KF, where the processing speed is equal to 448 fps. Furthermore, it is clear that the proposed method is $ 140\%-180\%$ faster than the TKF process for every value of $c$. This is due to the fact that the proposed method becomes the standard KF (which is time efficient) when the measurements lie into the uncensored region, while in the TKF process, the censored mean vector and covariance matrix are calculated for each measurement. Finally, we notice that MOTA decreases for the CKF and TKF, as  $c$ increases (e.g $c=25$). This result is expected, since both methods converge to the standard KF as the censoring limits (\ref{limits_low}), (\ref{limits_up})  increase.

	\begin{table*}[h]
		\renewcommand{\arraystretch}{1.3}
		\begin{center}
			\begin{tabular}{ |c|c|c|c|c|c|c|c| }
				\hline
				\ \textbf{Method}&\textbf{MOTA} $ \uparrow $  & \textbf{MOTP} $ \uparrow $ & \textbf{FA} $ \downarrow $& \textbf{FP} $ \downarrow $& \textbf{FN} $ \downarrow $&\textbf{ID sw} $ \downarrow $& \textbf{Hz} $ \uparrow $\\
				\hline
				\hline				
				\textbf{KF} & 31.9 & 72.1 & 0.70 & 2208 & 14370 & 155& \textbf{448}\\
				\hline
				\hline
				\textbf{CKF} c=10 & 31.9 & 71.9 & 0.69 & 2176 & 14716 & 142& 365\\
				\textbf{TKF} c=10 & 30.8 & 71.6 & 0.75 & 2350 & 14862 & 152& 151\\
				\hline
				\hline
				\textbf{CKF} c=15 & \textbf{32.8} & \textbf{72.1} & \textbf{0.67} & \textbf{2120} & \textbf{14615} &\textbf{ 136}& 384\\
				\textbf{TKF} c=15 & 32.1 & 71.9 & 0.70 & 2214 & 14682 & 141& 149\\
				\hline
				\hline
				\textbf{CKF} c=20 & 32.6 & \textbf{72.1} & 0.68 & 2122 & 14638 & 147& 406\\
				\textbf{TKF} c=20 & 32.6 & \textbf{72.1} & 0.68 & 2137 & 14633 & 138& 149\\
				\hline
				\hline
				\textbf{CKF} c=25 & 32.5 & \textbf{72.1} & 0.68 & 2129 & 14661 & 152& 416\\
				\textbf{TKF} c=25 & 32.5 & \textbf{72.1} & \textbf{0.67} & 2131 & 14656 & 148& 148\\
				\hline
				
			\end{tabular}
		\end{center}		
		\caption{Performance of KF, TKF and CKF on 2D MOT 2015 training sequences \cite{2DMOT2015} with static camera.}
		\label{mot_train}
	\end{table*}

	\section{Conclusion}
	
	In this paper a new framework in stochastic filtering for Tobit Type I censored measurements is established. In order to cope with these measurements, we propose the novel filtering method CKF,  which relies on Bayesian estimation. By taking into account that in many cases of non-linear state-space models, the exact Bayesian estimates cannot be calculated, numerical approximations can be provided. To that end, we assumed by means of the proposed methodology that the hidden state vector can be estimated through an  approximation of its pdf  by a normal distribution. This approximation can be accepted in several cases, based on the censoring limits and the correlation coefficient. Then, we calculated in detail the posterior estimation of a hidden state vector  and the corresponding covariance matrix of the coordinates'  error estimations, when  the measurements lie in the censored area. Furthermore, we calculated the associated likelihood function given the censored measurements.
	
	Next, we evaluated the proposed method against standard KF and TKF, by two different experiments. 
	In the first experiment-simulation, two noisy oscillators were used, a damping and a simple harmonic oscillator. As  expected, the KF process exhibited a very poor performance, since it is does not cope with non-linearity. Among the three approaches, CKF appeared to have the best performance in both oscillator simulations. Finally, for these simulations the variance of the (latent) measurement noise was estimated with high accuracy by means of the likelihood function; to that end, both uncensored and censored data were used.  
	In the second experiment, the dataset 2D MOT 2015, which deals with Multi-Object Tracking (MOT), was utilized. The results showed that the proposed method had a better performance in MOT accuracy than TKF and KF. It is worth noting that the CKF processing speed was higher up to $180\%$  than the corresponding speed of TKF.            
	
	Moreover, as a step further it would be interesting to extend the proposed method in multidimensional censored measurements with correlated coordinates, in order to describe  efficiently   real-life problems with censored measurements through non-linear state-space models.

	\bibliographystyle{unsrt}

\end{document}